\let\oldvec\vec
\let\vec\oldvec
\begin{document}

\newtheorem{fact}[lemma]{Fact}

\title{Bounded Degree Group Steiner Tree Problems}

\author{Guy Kortsarz\inst{1} \and Zeev Nutov\inst{2}} 

\institute{Rutger university, Camden. \email{guyk@rutgers.edu} \and The Open University of Israel. \email{nutov@openu.ac.il}}

\maketitle

\newcommand {\ignore} [1] {}

\def\gst   {\sc Group Steiner Tree}
\def\bd   {\sc Bounded Degree}
\def\md   {\sc Min-Degree}
\def\skt   {\sc Steiner $k$-Tree}


\def\TT     {{\cal T}}
\def\XX     {{\cal X}}
\def\SS     {{\cal S}}
\def\FF     {{\cal F}}
\def\AA    {{\cal A}}
\def\LL    {{\cal L}}
\def\CC    {{\cal C}}

\def\si     {\sigma}
\def\Ga   {\Gamma}
\def\de     {\delta}
\def\al     {\alpha}

\def\empt {\emptyset}
\def\sem  {\setminus}
\def\subs  {\subseteq}

\def\t   {\tilde}

\def\f   {\frac}
\def\tw {\mathsf{tw}}

\begin{abstract}
We study two problems that seek a subtree $T$ of a graph $G=(V,E)$ 
such that $T$ satisfies a certain property and has minimal maximum degree.
\begin{itemize}
\item
In the {\md} {\gst} problem we are given a collection $\SS$ of groups (subsets of $V$) and $T$ should contain a node from every group.
\item
In the {\md} {\skt} problem we are given a set $R$ of terminals and an integer $k$,
and $T$ should contain at least $k$ terminals.
\end{itemize}
We show that if the former problem admits approximation ratio $\rho$ then the later problem admits approximation ratio $\rho \cdot O(\log k)$.
For bounded treewidth graphs, we obtain approximation ratio $O(\log^3 n)$ for {\md} {\gst}.

In the more general {\bd} {\gst} problem we are also given edge costs and degree bounds $\{b(v):v \in V\}$,
and $T$ should obey the degree constraints $\deg_T(v) \leq b(v)$ for all $v \in V$.
We give a bicriteria $(O(\log N \log |\SS|),O(\log^2 n))$-approximation algorithm for this problem on tree inputs,
where  $N$ is the size of the largest group, 
generalizing the approximation of Garg, Konjevod, and Ravi \cite{GKR} for the case without degree bounds.
\end{abstract}

\section{Introduction}

All graphs in the paper are assumed to be undirected, unless stated otherwise.
We study two problems that seek a subtree $T$ of a given graph $G=(V,E)$ 
such that $T$ satisfies a certain property and has minimal maximum degree.

\begin{center} \fbox{\begin{minipage}{0.97\textwidth} \noindent
\underline{{\md} {\gst}} \\
Input: \hspace*{0.2cm} A graph $G=(V,E)$ and a collection ${\cal S}$ of groups (subsets of $V$). \\
Output: A subtree $T$ of $G$ that contains a node from every group and has minimal maximum degree. 
\end{minipage}}\end{center} 

\begin{center} \fbox{\begin{minipage}{0.97\textwidth} \noindent
\underline{{\md} {\skt}} \\
Input: \hspace*{0.2cm} A graph $G=(V,E)$, a set $R \subs V$ of terminals, and an integer $k \leq |R|$. \\
Output: A subtree $T$ of $G$ that contains at least $k$ terminals and has minimal maximum degree. 
\end{minipage}}\end{center} 
 
The best ratio known for {\sc Min-Cost} {\gst} on tree inputs is $O(\log N \log |\SS|)=O(\log^2 n)$  
where $N$ is the size of the largest group \cite{GKR};
there is also combinatorial algorithm with ratio $O(\log^{2+\epsilon} n)$ \cite{CEK}. 
In the case of general graph inputs, embedding the 
input graph into a tree distribution with stretch $O(\log n)$ \cite{FRT} gives 
ratio by a factor $O(\log n)$ larger.
The above ratio for tree inputs is essentially tight due to the 
approximation threshold $\Omega(\log^{2-\epsilon}n)$ of \cite{HK}. 
It is easy to see that {\md} {\gst} is {\sc Hitting Set}/{\sc Set Cover} hard even on stars. 
Given a {\sc Hitting Set} instance, add a root $r$ connected to every element, and define the groups to be the sets.
Then any substar corresponds to a hitting set of size equal to the degree of $r$ in the substar. 

{\sc Min-Cost} {\skt} generalizes the $k$-{\sc MST} problem (the case $R=V$) that admits ratio $2$ \cite{G};
this implies ratio $4$ for {\sc Min-Cost} {\skt}.
The directed version of {\md} {\skt} admits ratio $O(\sqrt{n/d^*})$ \cite{KKN} 
where $d^*$ is the minimum possible degree.
No better approximation is known for undirected graphs.

Our results for these two problems are summarized in the following two theorems.

\begin{theorem} \label{t:1}
If {\md} {\gst} admits approximation ratio $\rho$, then {\md} {\skt} admits ratio $\rho\cdot O( \log k)$.
\end{theorem}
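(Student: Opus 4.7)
My plan is to reduce {\md} {\skt} to a sequence of {\md} {\gst} calls, losing an $O(\log k)$ factor in the max degree. By binary search over $\{1,\dots,n\}$ I may assume the optimum max degree $\Delta^*$ of the unknown $k$-tree $T^*$ is known, and by enumeration over $V$ I may fix a vertex $r$ that lies in $T^*$; both guesses cost only polynomial overhead. Starting from $T=\{r\}$ and $C=\emptyset$ (the set of terminals covered so far), the algorithm proceeds in phases.

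In each phase, let $k':=k-|C|$ be the remaining deficit. I would partition the uncovered terminals $R\setminus C$ uniformly at random into $m=\Theta(k'/\log k)$ groups and append $\{r\}$ as a singleton group so that the returned subtree must pass through $r$. Invoking the $\rho$-approximation for {\md} {\gst}, the key fact is that $T^*$ has at least $k'$ uncovered terminals and, since $m\le k'/\log k'$, a standard coupon-collector estimate shows $T^*$ hits every random group with probability $1-o(1)$; hence the {\gst} optimum is at most $\Delta^*$. The returned subtree $T_i$ therefore has max degree at most $\rho\Delta^*$ and brings in at least one new terminal per group. I absorb $T_i$ into $T$ and iterate.

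Since every $T_i$ contains $r$, the union $\bigcup_i T_i$ is connected, and in any spanning tree of the union the degree at each vertex is bounded by the sum of its degrees in the $T_i$'s, hence by the number of phases times $\rho\Delta^*$. Once $|C|\ge k$ this spanning tree is a valid {\skt} solution with max degree $O(\text{number of phases})\cdot\rho\Delta^*$.

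The main obstacle is to cap the number of phases at $O(\log k)$. The coupon-collector setup above only guarantees a $(1-\Theta(1/\log k))$-contraction of $k'$ per phase, which naively yields $\Theta(\log^2 k)$ phases. To save the extra $\log k$ factor I would argue that each phase in fact covers a constant fraction of the residual deficit --- for example by using groups of geometrically varying sizes so that any feasible {\gst} subtree accumulates $\Omega(k')$ new terminals rather than only $m=k'/\log k$ of them --- or by handling the final, small-deficit phase with a direct {\gst} call over singleton groups. This final tightening is the most technical part of the argument.
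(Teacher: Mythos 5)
Your framework (phases, random bins over the uncovered terminals, feeding each phase to the assumed {\md} {\gst} oracle, and gluing the returned trees at a common root $r$) is exactly the paper's warm-up argument, and it is carried out correctly as far as it goes. But as you yourself note, it only yields $\rho\cdot O(\log^2 k)$: forcing $T^*$ to hit \emph{every} bin w.h.p.\ caps the number of bins at $m=O(k'/\log k')$, the groups are disjoint so a feasible {\gst} tree is only guaranteed $m$ new terminals, and a $(1-\Theta(1/\log k))$ contraction per phase costs $\Theta(\log^2 k)$ phases. The statement to be proved is $\rho\cdot O(\log k)$, and the step that achieves it is precisely the one you leave as a sketch. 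Your two proposed fixes do not close this gap: geometrically varying group sizes still only guarantee one new terminal per group hit, and special-casing the final small-deficit phase does not help because the $\log^2$ loss is incurred uniformly across all phases (already going from deficit $k$ to $k/2$ takes $\Theta(\log k)$ of your phases), not just at the end.

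The paper's actual mechanism is different in the one place that matters: it hashes the terminals into $\Theta(k')$ bins (not $k'/\log k'$) using a pairwise-independent family ($i\mapsto ((ai+b)\bmod p)\bmod k$ with $p$ a prime in $[2k,4k]$). One can no longer hope that $T^*$ hits every bin, but by the first two terms of inclusion--exclusion each bin contains a terminal of $T^*$ with constant probability, so in expectation a constant fraction of the bins are ``full,'' and some pair $(a,b)$ in the polynomial-size sample space achieves this; the algorithm enumerates all pairs. The resulting subproblem --- cover a constant fraction of the groups rather than all of them --- is then reduced to ordinary {\md} {\gst} by a cheap gadget (a binary tree of dummy nodes attached at the root that absorbs the groups one is allowed to miss, at an additive $O(1)$ degree cost). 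Each phase therefore recovers $\Omega(k')$ new terminals while adding only $\rho\cdot d^*$ to any degree, so $O(\log k)$ phases suffice. In short: the missing idea is to trade ``every group is hit w.h.p.'' for ``a constant fraction of $\Theta(k')$ groups is hit with constant probability,'' together with a reduction from partial group coverage to {\md} {\gst}; without it your argument proves only the weaker $\rho\cdot O(\log^2 k)$ bound.
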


\begin{theorem} \label{t:2}
{\md} {\gst} on bounded treewidth input graphs admits approximation ratio $O(\log^3 n)$.
\end{theorem}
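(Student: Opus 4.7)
The plan is to reduce the problem on bounded-treewidth inputs to the tree-input case, so that the paper's bicriteria $(O(\log N \log |\SS|), O(\log^2 n))$-approximation for {\bd} {\gst} on trees can be invoked. Since the maximum degree of a feasible subtree lies in $\{1,\ldots,n-1\}$, I would first binary-search the optimum degree $d^*$; for each guess I set $b(v):=d^*$ for every $v$ and the (irrelevant) edge costs to $0$, so it suffices to produce a feasible subtree of maximum degree $O(d^* \log^3 n)$ whenever one of degree $d^*$ exists.

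Next, fix a balanced tree decomposition $(\mathcal{T},\{X_t\})$ of $G$ of constant width and depth $O(\log n)$, which exists because $G$ has bounded treewidth (standard Bodlaender-style balancing). From it I would build an auxiliary tree instance $H$ whose underlying topology is (a refinement of) $\mathcal{T}$, with each $v \in V(G)$ attached as a leaf at some chosen bag containing it; every group $S \in \SS$ then translates to the corresponding set of leaves of $H$. Invoking the bicriteria algorithm on $H$ with degree bound $O(d^* \log n)$ would yield a subtree of $H$ of maximum degree $O(d^* \log n) \cdot O(\log^2 n) = O(d^* \log^3 n)$ that hits every group.

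The main obstacle is establishing a two-way correspondence between subtrees of $G$ and subtrees of $H$ with controlled degree blow-up. In one direction, a subtree $T^* \subseteq G$ of maximum degree $d^*$ must induce a subtree of $H$ of maximum degree $O(d^* \log n)$ hitting the same groups: the $\log n$ factor tracks the depth of $\mathcal{T}$, since each edge $uv \in E(T^*)$ must be ``routed'' through the $O(1)$-sized bags along the $\mathcal{T}$-path between the bags anchoring $u$ and $v$, and a careful charging should show that each bag-node is hit by $O(d^* \log n)$ such routings. In the opposite direction, a low-degree subtree of $H$ lifts back to a low-degree subtree of $G$, which is easier because each bag has constant size, so the $G$-vertices meeting any bag form a clique in the chordal completion and contribute only an $O(1)$ factor.

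Combining these two maps with the tree-input bicriteria yields the claimed $O(\log^3 n)$-approximation. The step I expect to be most delicate is the charging argument in the forward direction: one must ensure that the degree blow-up is $O(\log n)$ and not $O(\log n \cdot \tw)$, even when many optimum-tree edges share long paths of bags in $\mathcal{T}$, which seems to require orienting each $G$-edge towards a canonical (e.g., topmost) common bag of its endpoints and then bounding the load on any single bag by $O(d^* \log n)$.
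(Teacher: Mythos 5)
Your high-level plan is the same as the paper's: build a depth-$O(\log n)$ hierarchical decomposition of $G$ into $O(1)$-size pieces, reduce to a tree instance at the cost of an $O(\log n)$ factor in the degree, and then invoke the $O(\log^2 n)$ degree guarantee for trees (your binary search on $d^*$ with zero costs and $b(v)=d^*$ is a reasonable way to invoke Theorem~\ref{t:3}). However, there are genuine gaps, concentrated in the direction you dismiss as ``easier.'' First, a subtree of your auxiliary tree $H$ is not a subtree of $G$: its internal nodes are bags, and to lift it back you must connect the at most $\tw(G)+1$ vertices of each used bag by actual paths in $G$ --- the chordal completion is of no help here, since its fill-in edges are not edges of $G$. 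These connecting paths can be long and pass through vertices far from the bag, so bounding the resulting degree increase is not an $O(1)$-per-bag matter; this is precisely where the paper does real work. It builds a hierarchy of \emph{vertex-disjoint} balanced separators (Algorithm~\ref{alg:1}), connects each separator $S$ only through the region $T_S$ that $S$ roots, so that no ancestor vertex is ever used, and then disjointness plus the resulting DFS structure give that each vertex of $G$ is touched by at most one separator per level, hence $O(\log n)$ added degree in total (Lemma~\ref{cons}). Second, and relatedly, in a balanced tree decomposition a single vertex may occur in \emph{many} bags (the bags containing $v$ form a subtree of $\TT$ of unbounded size, possibly with many bags per level), so your charging ``each bag-node is hit by $O(d^*\log n)$ routings'' does not translate into a bound on the degree of any single vertex of $G$ after lifting: a vertex lying in many used bags can accumulate load from all of them. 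The disjointness of the separator pieces is exactly what rules this out, and it is the feature your decomposition lacks.

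The forward direction that you correctly flag as delicate is handled in the paper by contracting each (now internally connected) separator so that the contracted graph is a DFS tree of height $O(\log n)$ whose non-tree edges are all backward; each backward edge of the optimum is rerouted along the tree path to its ancestor, and each contracted separator absorbs only $O(d^*)$ such reroutings per level, giving the additive $O(d^*\log n)$ degree penalty. Your idea of orienting each edge toward the topmost common bag is the right instinct, but without the contraction and the disjointness it again bounds only the load per bag, not the degree of any vertex of $G$. To repair the argument, replace the balanced tree decomposition by the recursive balanced-separator hierarchy and add an explicit separator-connection step with the ``use only descendants'' discipline.
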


On bounded tree width graphs {\md} {\skt} admits a polynomial time algorithm by dynamic programming,
so in this particular case a direct combination of Theorems \ref{t:1} and \ref{t:2} does not give a useful result.
However, Theorem \ref{t:1} is of interest for general and other types of graphs.
We became aware that recently {\md} {\gst} was shown to admit a non-trivial ratio on general graphs \cite{BP},
and combined with Theorem~\ref{t:1} this gives the first non trivial ratio for {\bd} {\skt} on general graphs.

We also consider the degree bounded version of the {\sc Min-Cost} {\gst} problem.

\begin{center} \fbox{\begin{minipage}{0.97\textwidth} \noindent
\underline{{\bd} {\gst}} \\
Input: \ A graph $G=(V,E)$ with edge costs $\{c(e):e \in E\}$, a collection ${\cal S}$ of groups,
and degree bounds $\{b(v):v \in V\}$. \\
Output: A min-cost subtree $T$ of $G$ that contains a node from every group and
obeys the degree bounds $\deg_T(v) \leq b(v)$ for all $v \in V$.
\end{minipage}}\end{center} 

\begin{theorem} \label{t:3}
{\bd} {\gst} on tree inputs admits a bicriteria randomized $(O(\log N \log |\SS|),O(\log^2 n))$-approximation algorithm,
where $N$ is the size of the largest group. 
Namely, the algorithm computes a tree $T$ that contains at least one node from every group,
has expected cost $O(\log N \log |\SS|)$ times the optimal, 
and with probability at least $1-1/n$ we have $\deg_T(v)=O(\log^2 n) \cdot b_v$ for all $v \in V$.
\end{theorem}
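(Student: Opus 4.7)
\medskip

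\noindent\textbf{Proof plan for Theorem~\ref{t:3}.}
The plan is to augment the LP relaxation used by Garg, Konjevod and Ravi \cite{GKR} with the degree constraints, and then show that their randomized rounding scheme, repeated the usual $O(\log N \log|\SS|)$ times, obeys both the cost guarantee and an $O(\log^2 n)$-factor violation of every degree bound with high probability.

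\smallskip

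I would first root the input tree $G$ at an arbitrary node $r$ and write the natural LP with a variable $x_e \in [0,1]$ per edge: minimize $\sum_e c(e)\,x_e$ subject to (i) the GKR cut constraints $\sum_{e \in \delta(X)} x_e \geq 1$ for every group $S \in \SS$ and every $r$--$S$ separator $X$, and (ii) the new degree constraints $\sum_{e \ni v} x_e \leq b(v)$ for all $v\in V$. This LP is feasible whenever a feasible integral solution exists, and can be solved efficiently on tree inputs. Let $x^*$ be an optimal fractional solution.

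\smallskip

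Next I would run the GKR top-down rounding on $x^*$: traverse the tree from $r$ downward, and for a non-root edge $e$ with parent edge $e'$, include $e$ with probability $x^*_e/x^*_{e'}$ conditional on $e'$ already being selected (edges incident to $r$ are selected with probability $x^*_e$). One call produces a random rooted subtree $T_i$ with $\Pr[e \in T_i] = x^*_e$. Repeat $K = \Theta(\log N \log|\SS|)$ times independently and output $T = \bigcup_{i=1}^K T_i$. The analysis of \cite{GKR} (applied unchanged, since the only new LP constraints are on degrees) gives $\mathbb{E}[c(T_i)] \leq \mathrm{OPT}_{\mathrm{LP}}$ and $\Pr[T_i \cap S = \emptyset] \leq 1 - \Omega(1/\log N)$ for every group $S$. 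Summing over $K$ iterations yields $\mathbb{E}[c(T)] = O(\log N \log|\SS|)\cdot \mathrm{OPT}$, and a union bound over $S\in\SS$ shows that $T$ covers every group with probability $\geq 1 - 1/n$.

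\smallskip

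For the degree bound the key identity is $\mathbb{E}[\deg_{T_i}(v)] = \sum_{e \ni v} x^*_e \leq b(v)$ by the new LP constraint, so $\mathbb{E}[\deg_T(v)] \leq K\,b(v) = O(\log^2 n)\cdot b(v)$. What remains is concentration, and this is the main technical obstacle: within a single iteration the edges incident to $v$ are positively correlated through the parent-edge conditioning, so a one-shot Chernoff bound on the child indicators fails. I would exploit the full independence across the $K$ iterations and decouple by conditioning on the parent-edge indicators $P_1,\dots,P_K$ of $v$ across iterations; conditional on these, the child-edge inclusions are independent Bernoullis, and a multiplicative Chernoff bound on the conditionally independent child selections, combined with a standard Chernoff bound on $\sum_i P_i$, yields $\deg_T(v) = O(\log^2 n)\cdot b(v)$ except with probability $\leq 1/n^2$. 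A union bound over the $n$ vertices gives the claimed $1-1/n$ guarantee, completing the bicriteria approximation.
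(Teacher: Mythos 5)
Your overall architecture (augment the GKR LP with degree constraints, reuse the GKR rounding for cost and coverage, analyze degrees separately via Chernoff) is the same as the paper's, but the degree constraint you add is the wrong one, and this is exactly where the difficulty of the theorem sits. You impose $\sum_{e \ni v} x_e \leq b(v)$, which only controls the \emph{unconditional} expected degree $\mathbb{E}[\deg_{T_i}(v)] = x(\delta(v))$. The GKR rounding, however, selects a child edge $e$ of $v$ with probability $x_e/x_{e_v}$ \emph{conditional} on the parent edge $e_v$ being selected; so in your own decoupling step, conditional on $P_i=1$ the child indicators are independent Bernoullis whose means sum to $x(\delta(v))/x_{e_v}$, which your constraint bounds only by $b(v)/x_{e_v}$ --- potentially polynomially larger than $b(v)$. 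Concretely, take $\sqrt n$ vertices $v_i$, each a child of $r$ with $x_{e_{v_i}}=1/\sqrt{n}$, each having $\sqrt n$ children carrying $x_e=1/\sqrt n$ (consistent with the cut constraints if each group contains one leaf below every $v_i$). Your degree constraint holds with $b(v_i)=2$, yet whenever $v_i$ is reached \emph{all} $\sqrt n$ of its children are selected with probability $1$; over $K=\Theta(\log^2 n)$ rounds the expected number of reached $v_i$'s is $\Theta(\log^2 n)$, so with high probability some $v_i$ ends up with degree $\sqrt n \gg O(\log^2 n)\cdot b(v_i)$. Neither the Chernoff bound on $\sum_i P_i$ nor the conditional Chernoff bound on the children can rescue this, because the failure is in the conditional mean itself, not in concentration.

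The fix, and the one genuinely new idea in the paper's proof, is to add the \emph{scaled} degree constraints $x(\delta(v)) \leq x_{e_v}\cdot b(v)$ (with a dummy parent edge of $r$ of value $1$, and $r$ guessed to lie in some optimal solution rather than chosen arbitrarily). These are still valid for inclusion-minimal integral solutions: $x_{e_v}=0$ forces $x(\delta(v))=0$, and $x_{e_v}=1$ gives $x(\delta(v))\leq b(v)$. Crucially they bound the conditional per-round expected degree by $b(v)$, so the total degree is dominated by a sum of $K\cdot\deg_G(v)$ independent Bernoullis with mean $\tau_v \leq K\, b(v) = O(\log^2 n)\, b(v)$, after which a case analysis on $\tau_v$ (large, moderate, below $1$) with the multiplicative Chernoff bound gives the $O(\log^2 n)$ violation with probability $1-1/n^2$ per vertex. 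The rest of your plan (cost, coverage, union bound over vertices) matches the paper and is fine.
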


This result generalizes the result of Garg, Konjevod, and Ravi \cite{GKR} for the case without degree bounds.
A bicriteria approximation $(O(\log N \log |\SS|),O(\log^3 n))$ for {\bd} {\gst} on tree inputs can be 
deduced from \cite{GKR}, but obtaining the better ratio in Theorem~\ref{t:3} is non-trivial. 
We also note that our result does not extend to general graph using known tree embeddings, 
since these may considerably increases the degrees.

In the rest of this section we give some motivation to these problems and review related work. 
Bounded and minimum degree network design problems have a wide range of applications, 
and they were studied extensively since \cite{J,BKN,LNSM}; see also the book \cite{LRS}. 
For many classic degree bounded problems, good approximation ratios were achieved using the Iterative Rounding method.
However, this method does not seem applicable for the problems we consider, 
e.g., for {\md} {\skt}, as is mentioned in \cite{LRS}.

The main motivation for the {\gst} problem comes from VLSI design.
The goal is to connect a collection $S\subseteq V$ of terminals to a designated root $r$ by a min-cost tree.
Any terminal has a set of multiple ports it can be placed at (ports of two different terminals may intersect).
The set of different ports in which a terminal may be placed at defines a group. 
The different possible location may be due to rotating,  or  mirroring, or  both. 
While low cost is highly desirable, the cost is payed once, and later the VLSI circuit is applied constantly.
Low degree implies that the computations can be done fast.
In \cite{china}, a natural VLSI problem reduces to our problem. 
This paper builds a tree with low degrees in order to bound from above the latency of the VLSI computation.
Low degrees are also important for efficient layout of the VLSI circuit \cite{SK}.
In the {\sc Multicommodity Facility Location under Group Steiner Access} problem \cite{PR}, 
each facility belongs to a group Steiner tree.  Short service times requires that such trees have low degrees.

Our motivation for studying the {\md} {\skt} problem comes from the 
{\sc Telephone $k$-Multicast} problem \cite{SCH}.
In this problem we are given an undirected graph 
and a vertex $r$ and a target $k$ terminals.
We want to send a message known by some root $r$ to at least $k$ terminals.
The communication model is the telephone model. In this model, 
the vertices that know the message can call
at most one neighbor and send it the message.
This means that a round is a matching between vertices who know the 
message to vertices who do not.
Say that $r$ knows the message. The {\sc Telephone $k$-Multicast} problem \cite{SCH} 
is to notify at least $k$ of the terminals in minimum number of rounds.
Note that every broadcasting scheme results in a directed tree 
in which your parent is the one that sent you the message.
The maximum degree in this multicast tree is a lower bound on the optimum,
because at every round we can send the message to at most one child.
Hence we need trees with $k$ terminals and low Maximum degree.

In addition we point out that 
the {\skt} problem is the minimum degree variant of two important and well studied problems:
the $k$-MST and the $k$-{\sc Steiner Tree} problems (see  \cite{G}).
As these problems are important, so are their their minimum degree variants.

\section{Proof of Theorem~\ref{t:1}}

Assume that {\md} {\gst} admits ratio $\rho$.
We will show that then {\md} {\skt} admits ratio $\rho\cdot O(\log k)$.
Fix an optimal solution $T^*$ for the {\md} {\skt} instance with maximum degree $d^*$ and terminal set $R^*$.

\medskip

We first give a simple randomized algorithm with expected ratio $\rho \cdot O(\log^2 k)$.
Given a {\md} {\skt} instance $G,R,k$, create $k/(5\log k)$ bins; the {\md} {\gst} instance groups collection $\SS$ is formed by 
putting uniformly at random, each terminal to a random bin. 

\begin{lemma} \label{l:pr}
With probability at least $1-1/k$ each bin contains a node from $R^*$.
\end{lemma}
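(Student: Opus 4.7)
The plan is to use a standard balls-into-bins calculation followed by a union bound. Fix a single bin $B$. Because each terminal in $R^*$ is independently placed into one of the $m := k/(5\log k)$ bins uniformly at random, and $|R^*| \geq k$ (since $T^*$ is a feasible solution for the {\md} {\skt} instance), the probability that no element of $R^*$ lands in $B$ is
\[
\left(1 - \tfrac{1}{m}\right)^{|R^*|} \;\leq\; \left(1 - \tfrac{5\log k}{k}\right)^{k} \;\leq\; e^{-5\log k} \;=\; k^{-5},
\]
using the inequality $1-x \leq e^{-x}$.

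Next I would apply a union bound over the $m = k/(5\log k) \leq k$ bins. The probability that at least one bin receives no element of $R^*$ is therefore at most
\[
m \cdot k^{-5} \;\leq\; k \cdot k^{-5} \;=\; k^{-4} \;\leq\; 1/k,
\]
which is the desired bound. Taking the complementary event proves the lemma.

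The only subtlety is justifying that $|R^*| \geq k$, but this is immediate from the definition of a feasible {\md} {\skt} solution. The rest is a routine ``coupon-collector style'' estimate; the constant $5$ in the bin count $k/(5\log k)$ is chosen precisely so that the single-bin failure probability beats the $k$-fold union bound with room to spare. No significant obstacle is expected.
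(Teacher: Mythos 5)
Your proof is correct and follows essentially the same route as the paper: fix a bin, bound the probability that it contains no terminal of $R^*$, and finish with a union bound over the bins. The only difference is that where the paper invokes a Chernoff bound on the binomial variable $|S\cap R^*|$ to show $\Pr[S\cap R^*=\emptyset]\leq 1/k^2$, you bound the empty-bin probability directly via $(1-5\log k/k)^{k}\leq e^{-5\log k}$, which is a slight simplification (and gives the stronger per-bin bound $k^{-5}$) but not a different argument.
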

\begin{proof}
For each $S \in \SS$, $|S \cap R^*|$
is a binomial variable with probability $5\log k/k$ and $k$ trials. 
Thus the expected size of $|S \cap R^*|$ is $\mu=5\log k$. By the Chernoff bound (c.f. \cite{MR})
$$
\Pr\left[|S \cap R^*| \leq (1-\delta)\mu\right] \leq \exp(-\delta^2\mu/2) \ .
$$
We plug the right $\delta$ so that $(1-\delta)\mu \leq 1$. This gives $\delta$ very close to $1$.
By the Chernoff bound $\Pr[S \cap R^*=\empt] \leq 1/k^2$. By the union bound 
we get that with probability at least $1-1/k$ each bin contains a node from $R^*$.
\qed
\end{proof}

The {\bd} {\gst} solution can choose the tree $T^*$ for the $k$-tree version to span $R^*$. 
Note that we need to cover only $k/(5\log k)$ groups which is {\em not} the {\md} {\gst} problem.
However, here is a trivial reduction to the {\md} {\gst} problem.
Attach a complete binary tree to the root, with $k-k/5\log k$ new leaves
(we may need to trim the tree to get exactly $k-k/(5\log k)$ leaves). 
Every new leaf belongs to all groups.
Thus $k-k/(5\log n)$ groups are covered for free with maximum degree $3$. 
This still requires covering  $k/(5\log k)$ new terminals completing the reduction.
The assumed algorithm will find  a tree containing at least 
$k/5\log k$ terminals, with maximum degree bounded by $\rho \cdot d^*$.
Taking $O(\log^2 k)$ iterations gives expected ratio $O(\log^2 k\cdot \rho)$.

\medskip

We now describe a more complicated determenistic reduction with factor loss $\log k$ in the ratio. 
Let the terminals be $0,1,\ldots,q-1$, $q>k$, and assume that the above order of the terminals is random.
We build $k$ bins to serve as groups using
two point based sampling (see \cite{MR}).
Let $p$ be a prime such that $2k\leq p\leq 4k$.
\begin{enumerate}
\item 
Choose a number $a$, at random, from $1,2,\ldots,p-1$.
\item 
Choose a number $b$, at random, from $0,1,\ldots,p-1$. 
\item 
Terminal $0\leq i\leq q-1$  is assigned bin $((ai+b)\mod p)\mod k.$
\end{enumerate}

Any true terminal $i$ is first matched to a {\em random number} in
$0,1,\ldots,p$. The values that will cause item $i$ to reach bin $j$ 
are $j,j+k,\ldots,j+\alpha\cdot k$ for the maximum 
integer
$\alpha$ such that $\alpha\cdot k\leq p-1$.
In the worst case $j=k-1$. 
Thus the question is how large is $\alpha$ in the inequality
$(k-1)+\alpha\cdot k\leq p-1$. Choosing $\alpha=(p-k)/k$
achieves the desired bound.
Since $\alpha$ is an integer, clearly, 
$p/k-2\leq \alpha<p/k$. Dividing by $p$, implies that 
the probability that terminal $i$ reaches bin $j$ is
at least $1/k-2/p$ and less than $1/k$.

Let $X_{ij}$ be the event that  {\em a true terminal $i$ reaches bin $j$}. 
By the above, $\Pr(X_{ij})\geq 1/k-2/p$.
The events "$i$ arrived to bin $j$" and "$i'$ arrived 
to bin $j$" for $i\neq i'$ are pairwise independent and so 
$\Pr(i\mbox{ and } i' \mbox{ arrive to bin }j)\leq 1/k^2$.
We lower bound the probability that $j$ is full.
using the first two terms of the inclusion exclusion formula
$$
\Pr\left[\bigcup_{i=0}^{k-1} X_{ij}\right]\geq k\cdot \left(\frac{1}{k}-\frac{2}{p}\right)-\frac{{k\choose 2}}{k^2}\geq \frac{1}{2}-\frac{2}{p} \ .
$$
Thus for every bin, the probability that it's full is at least $1/3$ and in expectation number of full bins is at least $k/3$.
Hence there exists a pair $a,b$ in the sample space for which at least $k/3$ bins are full.
Fortunately, our sample space of all $a,b$ pairs has size polynomial in $n$.
Thus we try all $a,b$ pairs with the goal of 
covering at least $k/3$ groups (we have shown above 
that this problem of covering $k/3$ groups can be reduced without penalty
in the degrees to the usual {\md} {\gst} problem). 
For every pair $a,b$, we apply the assumed $\rho$ ratio algorithm.
For at least one of the $a,b$ we get a tree with maximum 
degree at most $\rho \cdot d^*$ and at least $k/3$ 
terminals are covered. Thus outputting the minimum 
maximal degree tree over all $a,b$ choices guaranties 
(with probability $1$) that the maximum degree in the tree is at most $\rho\cdot d^*$, 
and at least $k/3$ groups  are covered.
The penalty is an additional $O(\log n)$ factor (on top of the $\rho$ factor).

\section{Proof of Theorem~\ref{t:2}}

We start by defining the treewidth of a graph.

\begin{definition}  
A {\bf tree decomposition} of a graph $G=(V,E)$ is a tree ${\cal T}$ on a collection 
${\cal X}$ of subsets of $V$, called {\bf bags} such that: 
{\em (i)} for every $uv \in E$ some $X \in {\cal X}$ contains both $u$ and $v$;
{\em (ii)} for every $v \in V$, the bags that contain $v$ induce a subtree of ${\cal T}$.
The {\bf width of a tree decomposition $\XX,\TT$} is $\displaystyle \max_{X \in {\cal X}}|X|-1$. 
The {\bf treewidth of a graph $G$} denoted by ${\tw}(G)$ is the smallest width of a tree decomposition of $G$. 
\end{definition}

Our algorithm does not use the above definition.
Instead, we use the fact that bounded treewidth graphs 
have an $O(1)$ separator. We build a tree of separators 
by computing the separator for the graph, computing the connected 
components resulting from removing the separator, and recursing on
each connected component.
Then we find a way to connect the separators that does not increases
the degrees by much. Hence we can contract each separator into a single node.
The groups of a contracted set is the union of all the groups of the set contracted.
This results in a DFS tree (namely all non tree edges are backward edges) of height $O(\log n)$.
If the optimal solution uses backward edges, we show how to change 
optimum into a new solution that does not contain backward edges,
with additive penalty of $O(\log n\cdot d^*)$ on the degrees.
Let $G'$ be the graph resulting after the contraction of the connected separators, and 
let $T'$ be the tree resulting by removing all the backward edges of $G'$.
Note that the modified solution is a subtree of $T'$. This implies that we can change 
the input into $T'$, and use our approximation for {\md} {\gst} on trees.

This process is very similar to the tree embedding reduction from graphs to trees,
that later uses the $O(\log^2 n)$ approximation of \cite{GKR} 
for {\sc Min-Cost} {\gst} on input trees to derive a solution. 
In both cases the reduction to trees incurs an $O(\log n)$ 
factor in the ratio, and in both cases the final ratio is $O(\log^3 n)$.

\begin{definition}
A subset $X$ of nodes in a graph $G$ is called an {\bf $\al$-balanced separator} if 
every connected component in $G \sem X$ contains at most $\al n$ nodes.
\end{definition}

It is known that any graph $G$ has $2/3$-balanced separator of size ${\tw}(G)+1$.
We will use a linear time algorithm with a slightly worse $\al$.

\begin{lemma} [\cite{R}] \label{lemma1}
There is a linear time algorithm that finds a $4/5$-balanced separator of size ${\tw}(G)+1$.
\end{lemma}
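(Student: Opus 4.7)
The plan is a centroid-style search over a tree decomposition of $G$. First, compute a tree decomposition $(\XX,\TT)$ of $G$ of width $\tw(G)$, so every bag has at most $\tw(G)+1$ vertices; the separator returned will be a single such bag. Root $\TT$ at an arbitrary node and, for every $t \in \TT$, let $V_t \subs V$ denote the union of the bags in the subtree of $\TT$ rooted at $t$.

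Now perform a greedy descent in $\TT$: starting at the root, while the current node $t$ has some child $c$ with $|V_c| > 4n/5$, move to such a child; halt at $t^*$ when no child of $t^*$ satisfies this condition. I claim that the bag $X_{t^*}$ is a $4/5$-balanced separator of $G$. By the subtree property of tree decompositions, every vertex $v \notin X_{t^*}$ appears only in bags lying in one connected component of $\TT \sem \{t^*\}$, and every edge $uv$ of $G \sem X_{t^*}$ is contained in a common bag, so its endpoints lie in the same such component. Hence each connected component of $G \sem X_{t^*}$ is contained either in $V_c \sem X_{t^*}$ for some child $c$ of $t^*$, or in $V \sem V_{t^*}$. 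Components of the first type have at most $|V_c| \leq 4n/5$ vertices by the halting condition. For the second case, either $t^*$ is the root, in which case $V \sem V_{t^*} = \empt$, or the descent entered $t^*$ via a step that forced $|V_{t^*}| > 4n/5$; in both cases $|V \sem V_{t^*}| < n/5$. Since $|X_{t^*}| \leq \tw(G)+1$, the claim follows.

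The main obstacle is obtaining the tree decomposition in linear time at width exactly $\tw(G)$. This is supplied by Bodlaender's algorithm, whose running time is linear for each fixed value of the treewidth -- precisely the regime of Theorem~\ref{t:2}. Once the decomposition is available, a single post-order traversal of $\TT$ computes all values $|V_t|$, and the descent visits $O(|\TT|) = O(n)$ nodes, so the overall running time is linear in $|V|$.
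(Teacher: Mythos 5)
Your argument is correct, but note that the paper itself offers no proof of this lemma at all --- it is quoted as a black box from Reed \cite{R} --- so any self-contained derivation is necessarily a different route. The combinatorial core of your proof is sound: halting the descent at the first node $t^*$ none of whose children $c$ has $|V_c|>4n/5$ gives a bag of size at most $\tw(G)+1$, and the standard use of properties (i) and (ii) correctly places every component of $G \sem X_{t^*}$ either inside some $V_c \sem X_{t^*}$ (at most $4n/5$ vertices by the halting rule) or inside $V \sem V_{t^*}$ (fewer than $n/5$ vertices, since the step into $t^*$ certified $|V_{t^*}|>4n/5$); termination is immediate since the descent only moves to children. The genuine difference lies in where the algorithmic weight sits. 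You front-load everything onto Bodlaender's linear-time exact tree-decomposition algorithm, which is legitimate in the fixed-treewidth regime of Theorem~\ref{t:2} but hides a $2^{O(\tw^3)}$ factor and historically postdates Reed's paper. Reed's contribution runs in the opposite direction: his algorithm finds the balanced separator directly on the graph, without first constructing a decomposition --- the separator routine is the engine used to \emph{build} approximate tree decompositions, not a corollary of having one. So your proof buys a short, transparent argument at the price of a heavy black box; the cited result buys independence from that black box and far better dependence on the width. Two small points worth tidying: your separator has size \emph{at most} $\tw(G)+1$ rather than exactly that, which is all the lemma needs; and since Lemma~\ref{lemma1} is invoked in Algorithm~\ref{alg:1} on subgraphs $H$ of $G$, you should remark that $\tw(H) \leq \tw(G)$ so the size bound survives the recursion.
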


We use this to construct a tree of separators as follows,
where ${\cal L}$ will be the sets of leaves in the tree.

\medskip \medskip

\begin{algorithm}[H]
\caption{\sc{Separator-Tree}$(G=(V,E))$}
\label{alg:1}
${\cal C}\gets \{G\}$, ${\cal S}\gets \emptyset$, ${\cal L}\gets \emptyset$  \\
\While{\em there is $H \in {\cal C}$ with at least ${\tw}(G)+1$ nodes}
{
find a separator $S_H$ of $H$ of size ${\tw}(G)+1$ using Lemma \ref{lemma1} \\
add to ${\cal L}$ all connected components of size at most $k$ in $H-S_H$ \\
add to ${\cal C}$ all connected components of size at least $k+1$ in $H-S_H$ \\ 
delete $S_H$ from $H$ and add it to ${\cal S}$
}
\Return{${\cal S},{\cal L}$}
\end{algorithm}

\subsection{Connecting the separators}

For a separator $S$, let $T_S$ be the tree  of separators rooted at $S$.
We connect the separators as follows.

\begin{definition}
Consider a connected component of $H-S_H$ and its $O(1)$ size separator $S$.
Denote by  $T_S$, the tree of separators rooted by $S$.
If $S$ is a leaf (a set with less  than $k$ nodes) set $T_S=S$.
\end{definition}

Note that $T_S$ is by definition, a connected graph.
As $T_S$ is connected, there is a path in $T_S$ between any two nodes of $S$.
The leaves are also connected sets with less than $k$ nodes.

The way we connect separators is as follows.
For every $S$, choose an arbitrary node $u\in S$, and connect $u$ using $T_S$ (say, with shortest paths) 
to all the nodes of $S-u$.
Note that only nodes of $T_S$ are used. In particular 
nodes that are inside ancestors of $S$ are not used.
If we need to connect a leaf, only the edges inside $S$ are used.
A formal algorithm for connecting separators is as follows.

\medskip \medskip

\begin{algorithm}[H]
\caption{\sc{Connect-Separators}$(G=(V,E),\{S_i\})$}
\label{alg:2}
$E'\gets \emptyset$  \\
connect every leaf separator by an arbitrary spanning tree \\
for every non leaf separator $S$, choose a node $u\in S$ and 
add $k-1$ shortest paths from $u$ to $S-u$ in the graph induced by $T_S$ \\
\Return{${\cal S},{\cal L}$}
\end{algorithm}

\medskip \medskip

Note that the edges of the ancestors sets of $S$ are not used here.

\begin{lemma} \label{cons}
The above adds to the degree of every node, at most $O(k\cdot \log n)=O(\log n)$ edges, for every node.
\end{lemma}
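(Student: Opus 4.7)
The plan is to partition the degree increase at an arbitrary vertex $v$ according to which separator or leaf $X_v$ contains $v$ in the tree produced by Algorithm~\ref{alg:1}, and then bound the contributions coming from each processed set.

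First I would observe that since Lemma~\ref{lemma1} guarantees $4/5$-balanced separators, the separator tree has depth $O(\log n)$. Moreover, a node placed into a separator is removed from the recursion, so each $v \in V$ lies in exactly one set $X_v$ of the separator tree (either a separator or a leaf component).

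Next I would exploit the locality built into Algorithm~\ref{alg:2}: when the algorithm processes a separator $S$, the $k-1$ shortest paths it adds are drawn from $G[T_S]$, so only vertices that lie in $T_S$ can receive new incident edges at that step. Therefore $v$ can gain new degree only from (a) the processing of the leaf containing $v$, when $X_v$ is a leaf, and (b) the processing of each ancestor separator $S$ of $X_v$ in the separator tree, plus $X_v$ itself when $X_v$ is a separator. The number of such ancestors is bounded by the depth of the tree, hence $O(\log n)$.

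Finally I would bound each contribution. The spanning tree in case (a) has fewer than $k$ edges and contributes $O(k)$ to any vertex. In case (b), for each ancestor $S$ the algorithm adds $k-1$ simple shortest paths sharing a common endpoint $u$; since $v$ appears at most once on each such path, its degree within the union of these paths is at most $2(k-1)=O(k)$. Summing over the $O(\log n)$ relevant separators gives total increase $O(k\log n)=O(\log n)$ using $k=\tw(G)+1=O(1)$. The main point to verify carefully — and really the only obstacle — is the locality property: paths added on behalf of $S$ stay inside $T_S$ and thus cannot affect the degree of any vertex sitting in a sibling subtree or strictly above $S$ in the separator tree. Once this containment is explicit, the lemma reduces to summing $O(\log n)$ constants.
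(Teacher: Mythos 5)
Your proposal is correct and follows essentially the same argument as the paper: the degree increase at a vertex $v$ is charged only to the separators $S$ with $v \in T_S$ (i.e., the ancestors of the set containing $v$), of which there is at most one per level of the $O(\log n)$-depth separator tree, each contributing $O(k)=O(1)$. Your version is slightly more explicit about the locality of the shortest paths and the per-path degree bound of $2$, but the decomposition and the key observation are the same.
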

\begin{proof}
Because of the DFS  structure of the separators 
(namely every edge that is not between parent child separator 
is a backward edge), the degree of a node can be affected 
by at most one separator $S$ per level. We add $|S|-1$ paths to connect $s$,
and the paths belong to $T_S$, the tree that $S$ roots.
Since $|S|=k=O(1)$ the number of paths per $S$ is $O(1)$.
Since a node is affected by at most one separator at each one 
of the $O(\log n)$ levels, the degree added in the procedure is $O(k\cdot \log n)=O(\log n)$.
\qed
\end{proof}

\noindent
{\bf Remark:} Let $S_R$ be the root separator. 
The fact that connecting a separator
does not require edges to ancestors of the separators, is crucial.
Indeed a level $i$ may have a very large number 
of separators. And if all of them, say, use edges of the root separator set, $S_R$, the degree in $S_R$ can not be bounded.

\medskip

We next contract each separator $S$. The groups the new node 
belongs to, is the union over all nodes in $S$ of the group that node belongs to.

\subsection{After the separators are connected}

Since the separators are connected we contract every separator into a single node, and update the groups it belong to to the union 
of the groups that each node in $S$ belongs to. We remove self loops and parallel edges.
Let $G'$ be the 
resulting new graph. 
Let $T'$ be the tree $T'$ resulting from removing 
all backward edges from $G'$.
The tree $T'$ has height $O(\log n)$.
Note that $G'$ contains all the edges of $G$, that were not contracted yet.
The contracted edges induce a low degree and are part of the solution.

We show how to modify an optimum set $OPT$ of edges,
so that the new solution will not contain backwards edges,
and the degrees increase by at most an additive factor of $O(\log n\cdot d^*)$.
Since the new solution has no backward edges 
it is a subtree of $T'$. Thus we can change the input 
to {\em $T'$ } and apply our algorithm for trees, since our modified solution is a subtree of $T'$.

The following algorithm we use information on  $OPT$, the optimum solution, that we do not know.
Since $OPT$ is not known, the proof is existential.

Denote by $r$ the root of the contracted graph.

\medskip \medskip

\begin{algorithm}[H]
\caption{\sc{Discard-Back-Edges}($T(G,r)$)} \label{alg:3}
\While{\em there is a backward edge $(r,w)$ in $OPT$}
{
Remove $(r,w)$ from $OPT$ \\
Add the path $P_{r,w}$ from $r$ to $w$ in $T'$ to $OPT$
}
\Return{\em the resulting graph}
\end{algorithm}

\medskip \medskip

\begin{algorithm}[H]
\caption{\sc{Tree-Reduce}$(OPT,u)$} \label{alg:4}
\If{\em the height of the tree is at least $2$}
{
Apply Algorithm~3 on $OPT,u$  \\
For every child $w$ of $u$ apply the algorithm recursively on $T_w$
}
\end{algorithm}

\medskip \medskip

\begin{theorem}
The above algorithm finds a solution of degree at most $O(\log^3 n)\cdot d^*$
with $d^*$ the minimum degree.
\end{theorem}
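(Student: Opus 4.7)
The plan is to chain three facts: (i) a carefully modified copy $\widetilde{\mathrm{OPT}}$ of the true optimum is feasible inside the reduced tree $T'$, (ii) its maximum degree in $T'$ is only $O(\log n) \cdot d^*$, and (iii) an $O(\log^2 n)$-approximation for {\md} {\gst} on tree inputs (adapted from the GKR framework \cite{GKR}, or obtained by binary searching $d^*$ on top of Theorem~\ref{t:3}) then yields the claimed $O(\log^3 n) \cdot d^*$ bound on $T'$, which lifts back to $G$ without further loss.

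For feasibility I would argue as follows. Algorithm~\ref{alg:2} makes every separator internally connected, so contracting each separator into a super-node produces a graph $G'$ in which the group of a super-node is the union of the groups of the contracted vertices. The true optimum projects to a subgraph of $G'$ that still covers every contracted group. Algorithm~\ref{alg:4} then iteratively replaces each backward edge $(u,w)$ by the unique $u$--$w$ path inside $T'$; since the replacement edges are tree edges, no new backward edges are introduced, and after the recursion terminates the modified $\widetilde{\mathrm{OPT}}$ lies entirely inside $T'$ while still covering every group.

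For the degree bound, fix a vertex $v$ of $T'$ and an ancestor $u$ of $v$. When Algorithm~\ref{alg:3} is executed at $u$, the backward edges it processes are backward edges of the original optimum incident to $u$, of which there are at most $d^*$. A rerouted path $P_{u,w}$ traverses $v$ only when $w$ lies in the subtree rooted at $v$, and then contributes at most $2$ to $\deg_{\widetilde{\mathrm{OPT}}}(v)$. Summing over the $O(\log n)$ ancestors of $v$ yields $\deg_{\widetilde{\mathrm{OPT}}}(v) \le d^* + O(\log n) \cdot d^* = O(\log n) \cdot d^*$ inside $T'$. Feeding this bound into the $O(\log^2 n)$-approximation for {\md} {\gst} on $T'$ produces a subtree of maximum degree $O(\log^3 n) \cdot d^*$; un-contracting each super-node $v_S$ contributes to any underlying $v \in S$ only the additive $O(\log n)$ already charged in Lemma~\ref{cons}, which is absorbed in the final bound.

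The main obstacle is the second step. A careless count would charge, at each of the $O(\log n)$ ancestors of $v$, up to $O(\log n) \cdot d^*$ rerouted paths through $v$ (one per descendant at each intermediate level), which would give $O(\log^2 n) \cdot d^*$ here and $O(\log^4 n) \cdot d^*$ in total. The saving rests on the invariant that Algorithm~\ref{alg:3} never produces a backward edge, so the paths rerouted at an ancestor $u$ are in bijection with the at most $d^*$ original backward edges incident to $u$; verifying this invariant and the corresponding per-ancestor (rather than per intermediate-level) accounting is the crux of the analysis.
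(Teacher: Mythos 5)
Your proposal is correct and follows essentially the same route as the paper: argue that rerouting each backward edge of $\mathrm{OPT}$ through the corresponding tree path costs each node only $O(d^*)$ per ancestor level of the height-$O(\log n)$ separator tree, yielding an existential solution of degree $O(\log n)\cdot d^*$ inside $T'$, and then invoke the $O(\log^2 n)$ tree algorithm. The only nitpick is that a contracted separator node represents ${\tw}(G)+1=O(1)$ original vertices, so it is incident to $O(d^*)$ rather than at most $d^*$ backward edges of $\mathrm{OPT}$; this does not affect the asymptotic bound.
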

\begin{proof}
We bound the affect of the (existential) modification of our solution into a tree.
We show that the number of edges added to a node is at most $O(\log n\cdot d^*)$ with $d^*$ the minimum degree.
At every level, a node is influenced by only one separator $S$.
The size of each separator $S$ is $k=O(1)$. 
Each one of the $O(1)$ nodes inside a separator, is touched by at most $d^*\cdot O(1)=O(d^*)$ downward edges. Each downward edge, 
was replaced by a path. Thus the unique separator $S$ of a node $v$ 
in $T_S$ may add $2\cdot O(d^*)=O(d^*)$ edges to $v$.
Since the number of levels is $O(\log n)$, the change in the degree 
of every node will be an additive $O(\log n)\cdot d^*$ factor in the degree.
This proves the existence of a tree 
of maximum degree $O(\log n)\cdot d^*$. Applying our approximation 
for MDGS on trees derives an $O(\log^3 n)$ approximation ratio.
\qed
\end{proof}

\section{Proof of Theorem~\ref{t:3}}

We will assume that we know a node $r$ that belongs to some optimal solution.
We root the input tree $T$ at $r$. For $S \in \SS$ let 
$
\AA_S=\{A \subs V: r \notin A, S \subs A\}
$ 
be the family of cuts that separate the group $S$ from $r$. 
Let $\AA=\cup_{S \in \SS} \AA_S$ be the family of all cuts that separate $r$ from some group. 
The algorithm of Garg, Konjevod, and Ravi \cite{GKR} uses the following natural LP for the {\sc Min-Cost} {\gst} problem
\[\begin{array} {llllllll} 
&  \min           & c \cdot x                                        &                            &\\
&  \mbox{s.t.} & x(\de(A)) \geq 1 \hspace{0.3cm} & \forall A \in \AA &\\
&                     & x_e \geq 0                                     & \forall e  \in E    & 
\end{array}\]

A standard randomized rounding process may fail to connect even one group. 
The authors of \cite{GKR} give a special rounding method.
For $e \in E$ let $p(e)$ be the parent edge of $e$, $p^2(e)=p(p(e))$ the parent edge of $p(e)$, and so on -- 
$p^i(e)$ is the $i$th edge on the path from $e$ to the root.
Add a dummy parent edge $f$ of the root $r$ and set $x_f=1$.
The algorithm of \cite{GKR} connects a fraction of groups to the root by 
choosing every edge $e\in E$ with probability $x_e/x_{p(e)}$.
Then the probability that an edge $e$ of depth $i$ is connected to the root is 
$$
\f{x_e}{x_{p(e)}}\cdot \f{x_{p(e)}}{x_{p(p(e))}} \cdots \f{x_{p^{i-1}(e)}}{x_{p^i(e)}} \cdot \f{x_{p^i(e)}}{x_f}=\f{x_e}{x_f}=x_e \ .
$$
Thus the expected cost of the edges that are connected to $r$ is bounded by  
the value $c \cdot x$ of the LP solution. The key statement in \cite{GKR} is:

\begin{theorem} [\cite{GKR}]
The probability that a group $S$ is connected the root by the above random 
process is $\Omega(1/\log N)$.
\end{theorem}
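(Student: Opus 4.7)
The plan is a Paley--Zygmund (second moment) argument applied to $X=\sum_{v\in S}X_v$, where $X_v$ is the indicator that every edge on the root-to-$v$ path is selected. Both moments have clean closed forms coming from the same telescoping identity already used in the statement of the theorem. First I would compute $\mathbb{E}[X_v]=x_{e_v}$ for every $v\in S$, where $e_v$ is the parent edge of $v$; this is precisely the telescoping calculation stated just before the theorem. After replacing $S$ by the antichain of its maximally shallow vertices (only making the problem harder), the cut LP constraint applied to $A=\bigcup_{v\in S}T_v$ -- whose outgoing edges are exactly $\{e_v:v\in S\}$ -- gives $\mathbb{E}[X]=\sum_{v\in S}x_{e_v}\ge 1$. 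For distinct $v,w\in S$ with least common ancestor $u$, the paths $r\to v$ and $r\to w$ share exactly the segment $r\to u$ and are edge-disjoint below $u$; applying the telescoping identity separately to the three segments $r\to u$, $u\to v$, $u\to w$ and using independence of the per-edge coin flips yields
$$\mathbb{E}[X_vX_w]=\frac{x_{e_v}\,x_{e_w}}{x_{e_u}}.$$

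Paley--Zygmund then gives $\Pr[X\ge 1]\ge(\mathbb{E}[X])^2/\mathbb{E}[X^2]\ge 1/\mathbb{E}[X^2]$, so the entire theorem reduces to proving $\mathbb{E}[X^2]=O(\log N)$. Grouping the cross terms by LCA and writing $A_u=\sum_{v\in S\cap T_u}x_{e_v}$, the double sum collapses into a single sum over internal nodes $u$ of the form $A_u^2/x_{e_u}$ weighted by the drop of $x$ from $p(u)$ to $u$. My strategy for the $O(\log N)$ bound is: (i) assume WLOG monotonicity $x_{e_v}\le x_{e_{p(v)}}$ along root-to-leaf paths, enforced by capping $x_e$ at $x_{p(e)}$ (this preserves the cut constraint for $S$ and only decreases cost); (ii) discard $S$-vertices with $x_{e_v}<1/N^2$, whose cumulative contribution to $\mathbb{E}[X^2]$ is $O(1/N)$; (iii) partition the remaining nodes into $O(\log N)$ dyadic classes by $\lfloor\log_2(1/x_{e_u})\rfloor$ and show that each class contributes $O(\mathbb{E}[X])=O(1)$, using that $x_{e_u}$ is nearly constant within a class combined with a telescoping of the $A_u$ along root-to-leaf paths.

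The hard step is (iii). Pairs sharing a long root-to-$u$ prefix force $x_{e_u}$ to be small and hence $A_u^2/x_{e_u}$ potentially large, so no pointwise bound on $A_u^2/x_{e_u}$ can suffice; it is only the dyadic aggregation, combined with the approximate flow-conservation that an LP-optimal solution exhibits at internal tree nodes, that converts the pointwise blow-up into a logarithmic aggregate. Once $\mathbb{E}[X^2]=O(\log N)$ is established, Paley--Zygmund closes the argument and yields the claimed $\Omega(1/\log N)$ per-group success probability, which is what subsequently drives the outer $O(\log N\log|\SS|)$ cost factor in Theorem~\ref{t:3}.
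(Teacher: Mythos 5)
First, note that the paper does not prove this statement at all: it is quoted verbatim as a black-box result of \cite{GKR}, so there is no internal proof to compare against and your proposal has to be judged as a reconstruction of the GKR argument on its own terms.

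Your skeleton is the right one (it is essentially the second-moment/Janson route by which this lemma is usually proved): $\mathbb{E}[X_v]=x_{e_v}$ by the telescoping product, $\mathbb{E}[X]\ge 1$ from the cut constraint on the antichain, and $\mathbb{E}[X_vX_w]=x_{e_v}x_{e_w}/x_{e_u}$ for $u=\mathrm{lca}(v,w)$ are all correct, and Paley--Zygmund does reduce everything to $\mathbb{E}[X^2]=O(\log N)$. But that reduction \emph{is} the theorem, and you leave it as an explicitly acknowledged ``hard step,'' so as written there is a genuine gap rather than a proof. Two concrete problems. (a) The claim that the cross terms ``collapse into a single sum over internal nodes $u$ of the form $A_u^2/x_{e_u}$'' is false as a target bound: on a path of length $n$ with all $x_e=1$ ending in two group leaves, $\sum_u A_u^2/x_{e_u}=\Theta(n)$ while $\mathbb{E}[X^2]=4$. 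The correct collapse must either count only pairs with lca exactly $u$ (i.e.\ $A_u^2-\sum_{c}A_c^2$ over children $c$) or, equivalently, weight $A_u^2$ by the increment $1/x_{e_u}-1/x_{e_{p(u)}}$; with the latter the identity $\mathbb{E}[X^2]=\mathbb{E}[X]^2+\sum_{u}A_u^2\bigl(1/x_{e_u}-1/x_{e_{p(u)}}\bigr)$ holds, and then, \emph{assuming} $A_u\le x_{e_u}$, one gets $A_u^2\bigl(1/x_{e_u}-1/x_{e_{p(u)}}\bigr)\le A_u\ln\bigl(x_{e_{p(u)}}/x_{e_u}\bigr)$, which telescopes along root-to-leaf paths to $\sum_{v\in S}x_{e_v}\ln(1/x_{e_v})=O(\log N)$ after your $1/N^2$ truncation --- no dyadic bucketing is needed. (b) The inequality $A_u\le x_{e_u}$ that this hinges on is not ``approximate flow conservation'' of an LP optimum; a feasible cut solution need not satisfy it, and replacing $x$ by a flow $f\le x$ changes the rounding probabilities $x_e/x_{p(e)}$ non-monotonically and must be done once, consistently for all groups, before the algorithm rounds. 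This preprocessing (flow decomposition, pruning below group vertices, and your capping step) is exactly where the real work of the GKR lemma lives, and it is asserted rather than carried out. So: right framework, correct moment formulas, but the decisive estimate and its prerequisites are missing.
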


Thus the expected number of iterations required to connect all groups to the root 
is $O(\log N\cdot \log |\SS|)$, and therefore, this is the expected approximation ratio.

For a node $v$ let $e_v$ denote the parent edge of $v$.
In our algorithm for the {\bd} {\gst} we modify the \cite{GKR} LP by adding the following degree constraints inequalities:
\begin{equation} \label{e:dc}
x(\de(v)) \leq x_{e_v}  \cdot b_v  \ \ \ \ \ \forall v \in V \ .
\end{equation}
To see that these are valid inequalities, consider the chatacteristic vector $x$ of an inclusion minimal feasible solution $T$.
If $x_{e_v}=0$ then $x(\de(v))=0$, since $v \notin T$.
If $x_{e_v}=1$ then $x(\de(v)) \leq b_v =x_{e_v}  \cdot b_v$.

We use the same rounding as \cite{GKR}.
The added degree constrains inequalities do not change the cost approximation analysis of \cite{GKR} - 
the expected number of iterations is still $O(\log N \cdot \log |\SS|)$ and so is the expected cost approximation ratio.
We will analyze the degrees approximation separately.
We use the Chernoff bound.
If $X$ is a sum of $n$ independent Bernoulli variables, with mean $\mu$, then:
\begin{eqnarray}
\label{cher1}
\Pr\left[X> (1+\delta)\mu\right]\leq {\left (\frac{e^{\delta}}{(1+\delta)^{1+\delta}}\right)}^{\mu}.
\end{eqnarray}

The degree of $v$ results by $O(\log N\cdot \log |\SS|)$ iterations.
In each round we have a Bernoulli sum, of all the children of $v$ that did not reach 
the root yet. The difficulty here is that the  random Bernoulli variables are dependent.

For simplicity of the analysis, we bound the degree by $O(\log N\cdot \log |\SS|)$ {\em independent} 
Bernoulli sums, that contains all neighbors of $v$ in every round. This random variable bounds the degree from above
as a child $u$ can contribute more than $1$ to the degree.
However our random process gives  a sum of independent Bernoulli variable 
which  makes the analysis simpler.
For a node $v$, we have a sum of $\deg(v)\cdot O(\log N\cdot \log \SS|)$
independent Bernoulli variables. The expected degree is $\tau_v=O(\log N\cdot \log |\SS|)\cdot x(\de(v))/x_{e_v}$,
and note that $x(\de(v))/x_{e_v} \leq b_v$.
We now bound the expectation of $\tau_v$ by three claims.

\begin{claim}
If $\tau_v\geq c\cdot \log n$ for some constant $c$ then 
with probability $1-1/n^2$, $\deg(v)=O(\log^2 n)\cdot  b_v$.
\end{claim}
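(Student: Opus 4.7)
The plan is to apply the Chernoff bound~(\ref{cher1}) directly to the independent Bernoulli sum $X$ that the authors just constructed to dominate $\deg(v)$ from above. Since $X$ is by construction a sum of independent Bernoulli variables with mean $\tau_v$, the dependence issue has already been sidestepped, and only a routine large-deviation estimate is left.

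First I would fix a convenient multiplicative deviation, say $\delta=1$, so that~(\ref{cher1}) gives
\[
\Pr[X>2\tau_v] \;\leq\; \left(\frac{e}{4}\right)^{\tau_v}.
\]
Under the hypothesis $\tau_v\geq c\log n$, choosing $c$ to be any constant larger than $2/\log(4/e)$ forces this probability to be at most $1/n^2$. Since $\deg(v)\leq X$ by the construction explained just above the claim, the same tail bound transfers to $\deg(v)$, giving $\Pr[\deg(v)>2\tau_v]\leq 1/n^2$.

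It remains to translate $2\tau_v$ into the advertised form $O(\log^2 n)\cdot b_v$. By definition $\tau_v=O(\log N\cdot \log|\SS|)\cdot x(\de(v))/x_{e_v}$, and the newly added LP inequality~(\ref{e:dc}) yields $x(\de(v))/x_{e_v}\leq b_v$. Hence $\tau_v\leq O(\log N\cdot \log|\SS|)\cdot b_v = O(\log^2 n)\cdot b_v$, so the event $\deg(v)\leq 2\tau_v$ is exactly the desired bound $\deg(v)=O(\log^2 n)\cdot b_v$.

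There is no real obstacle here: the two potential difficulties (dependence among the per-iteration Bernoulli indicators, and bounding the expected degree by $b_v$) have both been resolved just before the claim, so this step is a plug-in application of Chernoff. The only thing to be a little careful about is that one must invoke the bound in its ``large-mean'' regime, which is precisely what the hypothesis $\tau_v\geq c\log n$ provides; the complementary small-mean case must be (and is) handled in the subsequent claims.
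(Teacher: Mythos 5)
Your proposal is correct and follows essentially the same route as the paper: both apply the Chernoff bound~(\ref{cher1}) with $\delta=1$ to the dominating independent Bernoulli sum, use $\tau_v\geq c\log n$ to get $(e/4)^{c\log n}\leq 1/n^2$ for $c$ large enough, and then convert $2\tau_v$ into $O(\log N\log|\SS|)\cdot b_v=O(\log^2 n)\cdot b_v$ via the LP inequality~(\ref{e:dc}). Your write-up is in fact slightly more careful than the paper's in making explicit that $\deg(v)\leq X$ and in quantifying the required constant $c$.
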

\begin{proof}
$$
\Pr\left[\deg(v)>2\tau_v\right]\leq \left(\frac{e}{4}\right)^{c\log n} \leq \frac{1}{n^2} \ .
$$ 
The last inequality holds for large enough $c$. 
Note that this implies that with probability $1-1/n^2,$ 
$\deg(v)\leq O(\log N\cdot \log |\SS|)\cdot b_v$.
The last inequality, uses the valid inequality $x(\de(v))/x_{e_v} \leq b_v$ from the LP. 
The ratio $O(\log^2 n)$ follows.
\end{proof}

We now deal with nodes for which 
$1\leq \tau_v\leq c\cdot \log n$ for some constant $c$.

\begin{claim}
In this case, with probability at least $1-1/n^2$, $\deg(v)=O(\log^2 n)$. 
Since $b_v\geq 1$ the ratio is $O(\log^2 n)$.
\end{claim}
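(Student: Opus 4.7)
The plan is to apply the Chernoff bound (\ref{cher1}) directly, choosing the deviation parameter so that $(1+\delta)\mu = C \log^2 n$ for a sufficiently large constant $C$. First, I would take $\mu = \tau_v$ and set $1 + \delta = C \log^2 n / \tau_v$. Since by assumption $\tau_v \geq 1$, we have $\mu \geq 1$, and since $\tau_v \leq c \log n$, we have $1 + \delta \geq (C/c) \log n$, so $\delta$ is large (in particular $\delta \geq 1$). Under this regime the Chernoff bound simplifies to essentially $\Pr[X > (1+\delta)\mu] \leq (e/(1+\delta))^{(1+\delta)\mu}$.

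Next I would substitute and take logarithms. The exponent $(1+\delta)\mu$ equals $C \log^2 n$, and the base $e/(1+\delta)$ is at most $ec/(C \log n)$. So
\[
\Pr[\deg(v) > C \log^2 n] \leq \left(\frac{ec}{C \log n}\right)^{C \log^2 n}.
\]
Taking logs, the exponent of this bound is $C \log^2 n \cdot (\log(ec/C) - \log \log n)$, which for $n$ large enough is at most $-C \log^2 n \cdot \tfrac{1}{2}\log \log n \leq -2 \log n$. Hence the probability is at most $1/n^2$, as required.

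Finally, a union bound over the (at most $n$) nodes $v$ in the relevant range gives total failure probability $O(1/n)$, and since $b_v \geq 1$ we obtain $\deg(v) \leq C \log^2 n \leq C (\log^2 n) \cdot b_v$, establishing the desired $O(\log^2 n)$ approximation on the degree.

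The only mild obstacle is that the Bernoulli variables summed to produce $\deg(v)$ are not independent in the original rounding; this has already been addressed in the setup by replacing the true degree with a dominating sum of independent Bernoulli trials over all children in every round, so the Chernoff bound (\ref{cher1}) applies directly. Everything else is a routine calculation in the regime where $\mu$ is a constant or $O(\log n)$ and the target deviation is $\Theta(\log^2 n)$, so that the $\log \log n$ factor in the exponent comfortably beats the required $1/n^2$ tail.
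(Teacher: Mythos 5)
Your proof is correct and takes essentially the same approach as the paper: a direct application of the Chernoff bound in the large-deviation regime, choosing $1+\delta = \Theta(\log n)$ (you fix the target $(1+\delta)\tau_v = C\log^2 n$, the paper fixes $1+\delta = \log n$; the computation is the same) and using $\tau_v \geq 1$ so that the exponent is at least $\Omega(\log^2 n \cdot \log\log n)$, which comfortably beats $1/n^2$. The union bound over nodes you mention at the end is deferred in the paper to after all three cases, but that is a cosmetic difference.
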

\begin{proof}
We know that $\tau_v\leq c\log n$.
Set $(1+\delta)=\log n$.

First we note that if we prove that 
$\Pr[\deg(v)\geq (1+\delta)\tau_v]\leq 1/n^2$, then since 
$\delta=O(\log n)$ and 
$\tau_v=O(\log n)$ we get that 
with probability $1-1/n^2$ that $\deg(v)=O(\log^2 n)$.
Since $b_v\geq 1$ this gives ratio $O(\log^2 n)$. 
We now prove the required inequality.

Since $\tau_v\geq 1$ we get from the Chernoff bound that:
$$
\Pr[\deg(v)\geq (1+\delta) \tau_v] \leq \frac{e^{\log n}}{\left (\log n\right)^{\log n }} \ .
$$
For large enough $n$ this probability is at most $1/n^2$.
\end{proof}

The last case is $\tau_v<1$.

\begin{claim}
In case $\tau_v<1$ with probability $1-1/n^2$, 
$\deg(v)=O(\log n)\cdot b_v$.
\end{claim}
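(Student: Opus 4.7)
The plan is to bound $\deg(v)$ by the same dominating sum of independent Bernoulli variables that was used in the previous two claims, and then apply the small-mean form of the Chernoff bound instead of the large-deviation form. Specifically, for any $k > \mu$ and any sum $X$ of independent Bernoullis with mean $\mu$, one has
$$
\Pr[X \geq k] \;\leq\; \left(\frac{e\mu}{k}\right)^k,
$$
which follows from (\ref{cher1}) by setting $k=(1+\delta)\mu$ and using $e^\delta/(1+\delta)^{1+\delta} \leq (e/(1+\delta))^{1+\delta}$. This is the right inequality to invoke here, because the previous claims set $(1+\delta)\mu$ to a concrete target and pushed the base of the exponent close to $e/(1+\delta)$; when $\mu < 1$ the cleaner way to express the result is to treat the target $k$ itself as the free parameter.

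I would then pick $k = c\log n$ for a sufficiently large absolute constant $c$. Since $\mu = \tau_v < 1$,
$$
\Pr[\deg(v) \geq c\log n] \;\leq\; \left(\frac{e\tau_v}{c\log n}\right)^{c\log n} \;\leq\; \left(\frac{e}{c\log n}\right)^{c\log n},
$$
which is at most $1/n^2$ once $c$ is chosen large enough, uniformly in $n$. Because every vertex appearing in a feasible tree has $b_v \geq 1$, the high-probability bound $\deg(v) \leq c\log n$ immediately gives $\deg(v) \leq c\log n \cdot b_v = O(\log n) \cdot b_v$, as claimed.

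The one delicate point, and the main obstacle I would watch for, is the same one the authors dealt with in the previous two claims: the "natural" Bernoulli variables (one per child of $v$, one per round of the rounding process) are not literally independent, since whether a child is still disconnected from $r$ at round $t$ depends on the outcomes of earlier rounds. I would handle this exactly as the paper already does, by replacing $\deg(v)$ with the stochastically dominating sum of $\deg(v) \cdot O(\log N \log |\SS|)$ independent indicators over all (neighbor, round) pairs, whose expectation is still $\tau_v$. With independence restored the Chernoff application above is immediate, and no new ingredient beyond the small-mean tail bound is required.
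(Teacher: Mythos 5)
Your proof is correct and takes essentially the same route as the paper: the small-mean tail bound $\Pr[X \geq k] \leq (e\mu/k)^k$ with $k = c\log n$ is precisely the paper's computation with $(1+\delta) = \log n/\tau_v$ once the $\tau_v$ factors cancel in the exponent, and both arguments then use $\tau_v < 1$ to drop $\tau_v$ from the denominator and $b_v \geq 1$ to convert the absolute bound into a ratio. The dominating sum of independent (neighbor, round) indicators you invoke to handle the dependence is exactly the device the paper sets up before stating the three claims.
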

\begin{proof}
We set $(1+\delta)=\log n/\tau_v$.
Note that if $\deg(v)\leq (1+\delta)\cdot \tau_v$ 
then $\deg(v)=O(\log n)$. As $b_v\geq 1$ the ratio is $O(\log n)$.
We now bound 
$$\Pr[\deg(v)> (1+\delta)\cdot \tau_v)]$$ 

Consider the term: 
$$\frac{e^{\delta}}{(1+\delta)^{1+\delta}}
	\leq \frac{e^{\log n/\tau_v}}{\left (\log n/\tau_v\right)^{\log n /\tau_v}}.$$

To get the Chernoff bound we should raise 
to above to the power $\tau_v$.
Raising this term to $\tau_v$,  
the $\tau_v$ factor cancels in both exponents. Thus:
$$\Pr\left[\deg(v) \leq (1+\delta)\tau_v\right] \leq \frac{e^{\log n}}{\left (\log n/\tau_v\right)^{\log n }}.$$
Since $\tau_v<1$ the above is bounded by 
$$
\frac{e^{\log n}}{\left (\log n\right)^{\log n }}.$$
and the above term is bounded by $1/n^2$ for large enough $n$.
\end{proof}

We got that with probability $1-1/n^2$, for a given $v$, $\deg(v)=O(\log^2 n)\cdot b_v$. 
By the union bound with probability $1-1/n$ for every $v$, $\deg(v)\leq O(\log^2 n)\cdot b_v$.


\end{document}